
 \documentclass[preprint,review,12pt]{elsarticle}



\usepackage{amssymb}
\usepackage{amsthm}





\newcommand{\scriptf}{\mathcal{F}}
\newcommand{\scripte}{\mathcal{E}}
\newcommand{\scriptv}{\mathcal{V}}
\newcommand{\sv}{{\mathcal V}}
\newcommand{\se}{{\mathcal E}}

\usepackage{graphicx}

\usepackage{subfigure}
\usepackage{amsmath}
\usepackage{amssymb}
\usepackage{mdwlist}
 
\usepackage{algorithm}
\usepackage{hyperref}
\usepackage{xspace}
\usepackage{setspace}

\newtheorem{theorem}{Theorem}

\newtheorem{claim}{Claim}

\newtheorem{definition}{Definition}

\journal{Computer Science}

\begin{document}

\begin{frontmatter}


\title{Broadcast Using Certified Propagation Algorithm in Presence of Byzantine Faults\tnoteref{label1}}
\tnotetext[label1]{\normalsize This research is supported in part by Army Research Office grant W-911-NF-0710287. Any opinions, findings, and conclusions or recommendations expressed here are those of the authors and do not necessarily reflect the views of the funding agencies or the U.S. government.}

\author{Lewis Tseng\corref{cor1}\fnref{label2}}
\ead{ltseng3@illinois.edu}
\author[label3]{Nitin H. Vaidya}
\ead{nhv@illinois.edu}
\author[label4]{Vartika Bhandari}
\ead{vartika@google.com}

\fntext[label2]{Department of Computer Science and Coordinated Science Laboratory, University of Illinois at Urbana-Champaign}
\fntext[label3]{Department of Electrical and Computer Engineering and Coordinated Science Laboratory, University of Illinois at Urbana-Champaign}
\fntext[label4]{Google, Inc.}
\cortext[cor1]{Corresponding author.}





\begin{abstract}
We explore the correctness of the Certified Propagation Algorithm (CPA) \cite{Koo_broadcast,Vartika_broadcast, Pelc_broadcast, Ichimura_broadcast} in solving broadcast with locally bounded Byzantine faults. CPA allows the nodes to use only local information regarding the network topology. We provide a {\em tight} necessary and sufficient condition on the network topology for the correctness of CPA. To the best of our knowledge, this work is the first to solve the open problem in \cite{Pelc_broadcast}. We also present some simple extensions of this result.

\end{abstract}

\begin{keyword}

Algorithms \sep Distributed computing \sep Fault tolerance \sep Broadcast \sep Byzantine faults \sep Tight condition


\end{keyword}

\end{frontmatter}


\section{Introduction}

In this work, we explore fault-tolerant broadcast with locally bounded Byzantine faults in synchronous point-to-point networks. We assume a {\em $f$-locally bounded model}, in which at most $f$ Byzantine faults occur in the neighborhood of every {\em fault-free} node \cite{Koo_broadcast}. In particular, we are interested in the necessary and sufficient condition on the underlying communication network topology for the correctness of the Certified Propagation Algorithm (CPA) -- the CPA algorithm has been analyzed in prior work \cite{Koo_broadcast,Vartika_broadcast, Pelc_broadcast, Ichimura_broadcast}. 

\paragraph{Problem Formulation}
Consider an arbitrary directed network of $n$ nodes. One node in the network, called the {\em source} ($s$), is given an initial input, which the source node needs to transmit to all the other nodes. The source $s$ is assumed to be {\em fault-free}. We say that CPA is {\em correct}, if it satisfies the following properties, where $x_s$ denotes the input at source node $s$:

\begin{itemize}
\item {\bf Termination:} every fault-free node $i$ eventually decides on an output value $y_i$.


\item {\bf Validity:} for every fault-free node $i$, its output value $y_i$ equals the source's input, i.e., $y_i = x_s$. As stated above, the source node is assumed to be fault-free.

\end{itemize}

In this paper, we study the condition on the network topology for the correctness of CPA.

\paragraph{Related Work}
Several researchers have addressed CPA problem.
\cite{Koo_broadcast} studied the problem in an infinite grid. \cite{Vartika_broadcast} developed a sufficient condition in the context of arbitrary network topologies, but the sufficient condition proposed is not tight. \cite{Pelc_broadcast} provided necessary and sufficient conditions, but the two conditions are not identical (not tight). \cite{Ichimura_broadcast} provided another condition that can approximate (within a factor of 2) the largest $f$ for which CPA is correct in a given graph.

Independently, conditions similar to our condition are also discovered by other researchers \cite{Secret, network} under other contexts. \cite{Secret} proved a similar condition to be sufficient (but not tight) to solve Shamir's $(n, k)$ threshold secret sharing problem, where the source wants to transmit shares of secret to all the other nodes, and all nodes are assumed to be honest-but-curious. In the context of cascading behavior in the network, \cite{network} showed that a similar condition is necessary and sufficient to achieve a complete cascade, i.e., all nodes have learned the value transmitted by a cluster of sources with same input values using only local information. In their model, all nodes are assumed to be fault-free. Due to our assumption of existence of Byzantine failures, the proofs in this paper are different from the ones in \cite{Secret, network}.

\section{System Model}
\label{sec:models}

The system is assumed to be {\em synchronous}. The synchronous communication network consisting of $n$ nodes including source node $s$ is modeled as a simple {\em directed} graph $G(\scriptv,\scripte)$, where $\scriptv$ is the set of $n$ nodes, and $\scripte$ is the set of directed edges between the nodes in $\scriptv$.  We assume that $n\geq 2$, since the problem for $n=1$ is trivial. Node $i$ can transmit messages to another node $j$ if and only if the directed edge $(i,j)$ is in $\scripte$. Each node can transmit messages to itself as well; however, for convenience, we {exclude self-loops} from set $\scripte$. That is, $(i,i)\not\in\scripte$ for $i\in\scriptv$. All the links (i.e., communication channels) are assumed to be reliable, FIFO (first-in first-out) and deliver each transmitted message exactly once. With a slight abuse of terminology, we will use the terms {\em edge} and {\em link} interchangeably.

For each node $i$, let $N_i^-$ be the set of nodes from which $i$ has incoming
edges. That is, $N_i^- = \{\, j ~|~ (j,i)\in \scripte\, \}$. Similarly, define $N_i^+$ as the set of nodes to which node $i$ has outgoing edges. That is, $N_i^+ = \{\, j ~|~ (i,j)\in \scripte\, \}$. Nodes in $N_i^-$ and $N_i^+$ are, respectively, said to be incoming and outgoing neighbors of node $i$. Since we exclude self-loops from $\scripte$, $i\not\in N_i^-$ and $i\not\in N_i^+$.  However, we note again that each node can indeed transmit messages to itself.

We consider the $f$-local fault model, with at most $f$ incoming neighbors of any fault-free node becoming Byzantine faulty. \cite{Koo_broadcast, Vartika_broadcast, Pelc_broadcast, Ichimura_broadcast} also studied CPA problem under this fault model. Yet, to the best of our knowledge, the tight necessary and sufficient conditions for the correctness of CPA in synchronous arbitrary point-to-point networks under $f$-local fault model have not been developed previously.

\section{Feasibility of CPA under $f$-local fault model}

\subsection{Certified Propagation Algorithm (CPA)}
\label{sec:cpa}

In this subsection, we describe the Certified Propagation Algorithm (CPA) from \cite{Koo_broadcast} formally. Note that the faulty nodes may deviate from this specification arbitrarily. Possible misbehavior includes sending incorrect and mismatching (or inconsistent) messages to different outgoing neighbors.

Source node $s$ commits to its input $x_s$ at the start of the algorithm,
i.e., sets its output equal to $x_s$.
The source node is said to have committed to $x_s$ in round 0.
The algorithm for each round $r~(r>0)$, is as follows:

\begin{enumerate}
\item Each node that commits in round $r-1$ to some value $x$,
transmits message $x$ to all its outgoing neighbors, and then terminates.

\item If any node receives message $x$ directly from source $s$, it commits to output $x$.

\item Through round $r$, if a node has received messages containing value $x$ from at least $f+1$ distinct incoming neighbors, then it commits to output $x$.
\end{enumerate}

\subsection{The Necessary Condition}
\label{s:condition}

For CPA to be correct, the network graph $G(\sv, \se)$ must satisfy the necessary condition proved in this section. We borrow two relations $\Rightarrow$ and $\not\Rightarrow$ from our previous paper \cite{vaidya_PODC12}.

\begin{definition}
For non-empty disjoint sets of nodes $A$ and $B$,

\begin{itemize}
\item $A \Rightarrow B$ iff there exists a node $v \in B$ that has at least $f+1$ distinct incoming neighbors in $A$, i.e., $|N_v^- \cap A| > f$.
\item $A \not\Rightarrow B$ iff $A \Rightarrow B$ is not true.
\end{itemize}
\end{definition}

\begin{definition}
Set $F \subseteq \sv$ is said to be a \underline{feasible} $f$-local fault set, if for each node $v \not\in F$, $F$ contains at most $f$ incoming neighbors of node $v$. That is, for every $v \in \sv - F, |N_v^- \cap  F| \leq f$.
\end{definition}

We now derive the necessary condition on the network topology.

\begin{theorem}
\label{thm:nec}
Suppose that CPA is correct in graph $G(\sv, \se)$ under the $f$-local fault model. Let sets $F, L, R$ form a partition\footnote{Sets $X_1,X_2,X_3,...,X_p$ are said to form a partition of set $X$ provided that (i) $\cup_{1\leq i\leq p} X_i = X$, and (ii) $X_i\cap X_j=\Phi$ if $i\neq j$.} of $\sv$, such that (i) source $s\in L$, (ii) $R$ is non-empty, and (iii) $F$ is a feasible $f$-local fault set. Then

\begin{itemize}
\item $L \Rightarrow R$, or

\item $R$ contains an outgoing neighbor of $s$, i.e., $N_s^+ \cap R \neq \emptyset$.

\end{itemize}

\end{theorem}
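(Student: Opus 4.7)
The plan is to prove the contrapositive by constructing an adversarial execution that violates termination. Assume CPA is correct in $G$, and suppose for contradiction that there is a partition $(F, L, R)$ satisfying the hypotheses but with both $L \not\Rightarrow R$ and $N_s^+ \cap R = \emptyset$. Fix two distinct values $x_s \neq x'$, take $x_s$ as the source input, and designate the nodes of $F$ as the actual Byzantine set (permissible because $F$ is a feasible $f$-local fault set). The Byzantine strategy is the simplest possible: every node in $F$ transmits $x'$ to each of its outgoing neighbors starting from round $1$, as if it had committed to $x'$ in round $0$. By the assumed correctness of CPA, every fault-free node (in $L$ or in $R$) must eventually commit to $x_s$.

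The core step is to show that, in this execution, no node of $R$ ever commits, contradicting termination. Let $v \in R$ be the first node of $R$ to commit, say in round $r$. Step 2 of CPA is unavailable at $v$ because $v \notin N_s^+$, so $v$ must commit via step 3: through round $r$ it has received a common value $y$ from $f+1$ distinct incoming neighbors. I would argue that all these senders lie in $L \cup F$. Indeed, each fault-free node of $R$ transmits only in the round following its own commit; by the choice of $v$, no such node has committed by round $r-1$, so none of them has delivered a message to $v$ through round $r$ (and $R \cap F = \emptyset$ by the partition). Now bound per-value counts: $|N_v^- \cap L| \leq f$ by $L \not\Rightarrow R$, and every $L$-neighbor of $v$ that has transmitted by round $r$ has committed to $x_s$ (by CPA correctness applied to those fault-free nodes), so $v$ has received $x_s$ from at most $f$ distinct incoming neighbors; similarly $|N_v^- \cap F| \leq f$ by feasibility of $F$, and each of those Byzantine senders transmits only $x'$, so $v$ has received $x'$ from at most $f$ distinct incoming neighbors. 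No value reaches the $f+1$ threshold, contradicting the step-3 commitment.

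The main obstacle, as I see it, is ruling out a ``relay'' effect whereby Byzantine nodes in $F$ could push some $L$-node into committing to $x'$, which would then be forwarded into $R$ and allow a node of $R$ to accumulate $f+1$ copies of $x'$. This is exactly where the working hypothesis that CPA is correct is used in a nontrivial way: correctness applied to the fault-free $L$-nodes forces each of them to output $x_s$, and hence to transmit $x_s$, which blocks any such relay. The remaining ingredients---the clean partition of $N_v^-$ into $L \cup F \cup R$ and the ``first in $R$ to commit'' device that zeros out contributions from fault-free $R$-neighbors---are bookkeeping that caps each value at $f$ distinct senders and closes the argument.
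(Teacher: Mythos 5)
Your proof is correct and takes essentially the same route as the paper's: the key device in both is the execution with $F$ as the actual fault set and the ``first node of $R$ to commit'' argument, which forces that node's $f+1$ informants to lie in $L$. The only difference is cosmetic --- the paper argues directly with the nodes of $F$ simply crashing (which makes your per-value counting of the $x'$ messages from $F$ unnecessary), while you take the contrapositive with an actively lying adversary; both are valid.
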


\begin{proof}
Consider any partition $F, L, R$ such that $s\in L$, $R$ is non-empty, and $F$ is a feasible $f$-local fault set. Suppose that the input at $s$ is $x_s$. Consider any single execution of the CPA algorithm such that the nodes in $F$ behave as if they have crashed.

By assumption, CPA is correct in the given network under such a behavior by the faulty nodes. Thus, all the fault-free nodes eventually commit their output to $x_s$. Let round $r~(r>0)$, be the earliest round in which at least one of the nodes in $R$ commits to $x_s$. Let $v$ be one of the node in $R$ that commits in round $r$. Such a node $v$ must exist since $R$ is non-empty, and it does not contain source node $s$. For node $v$ to be able to commit, as per specification of the CPA algorithm, either node $v$ should receive the message $x_s$ directly from the source $s$, or node $v$ must have $f + 1$ distinct incoming neighbors that have already committed to $x_s$. By definition of node $v$, nodes that have committed to $x_s$ prior to v must be outside $R$; since nodes in $F$ behave as crashed, these $f+1$ nodes must be in $L$. Thus, either $(s,v) \in \se$, or node $v$ has at least $f + 1$ distinct incoming neighbors in set $L$.


\end{proof}

\subsection{Sufficiency}

We now show that the condition in Theorem \ref{thm:nec} is also sufficient.

\begin{theorem}
\label{thm:suf}
If $G(\sv, \se)$ satisfies the condition in Theorem \ref{thm:nec},
then CPA is correct in $G(\sv,\se)$ under the $f$-local fault model.
\end{theorem}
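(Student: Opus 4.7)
The plan is to establish validity (every fault-free commitment equals $x_s$) by a straightforward induction on the round number, and then to derive termination from the topological condition by a contradiction argument.

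For validity, I would argue by induction on the round in which a fault-free node commits. The source commits to $x_s$ in round $0$. For the inductive step, a fault-free node $v$ that commits in round $r>0$ does so in one of two ways: either it received the value directly from $s$, in which case the value is $x_s$ because $s$ is fault-free and transmits only $x_s$; or it received some value $x$ from at least $f+1$ distinct incoming neighbors. In the latter case, since $v$ is fault-free, the $f$-local model guarantees that at most $f$ of those $f+1$ neighbors are faulty, hence at least one fault-free neighbor transmitted $x$; that neighbor must have committed to $x$ in a strictly earlier round (by Step~1 of CPA), so by the inductive hypothesis $x = x_s$.

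For termination, let $F$ be the actual set of faulty nodes (which is a feasible $f$-local fault set by assumption), let $L$ be the set of fault-free nodes that eventually commit, and let $R = \sv - F - L$. Note $s \in L$ by definition of CPA. Suppose, toward a contradiction, that $R$ is non-empty. Then $(F, L, R)$ is a partition satisfying the hypothesis of Theorem~\ref{thm:nec}, so either $N_s^+ \cap R \neq \emptyset$ or $L \Rightarrow R$. In the first case, pick any $v \in N_s^+ \cap R$; node $v$ receives $x_s$ directly from $s$ and commits by Step~2 of CPA, contradicting $v \in R$. In the second case, pick $v \in R$ with at least $f+1$ incoming neighbors in $L$; each such neighbor commits (by definition of $L$) and commits to $x_s$ (by the validity claim), and therefore transmits $x_s$ to $v$ upon committing. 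Hence $v$ eventually receives $x_s$ from $f+1$ distinct incoming neighbors and commits by Step~3, again contradicting $v \in R$.

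The only delicate point is the interaction between the ``eventually commits'' quantifier defining $L$ and the round structure of CPA: one must observe that if every incoming neighbor of $v$ in $L$ commits in some finite round, then by the synchronous model and Step~1, their transmissions of $x_s$ all reach $v$ in a finite round, after which Step~3 triggers. This is immediate from the algorithm specification, so no real obstacle arises; the bulk of the work is simply matching the partition supplied by the actual faulty set and the currently-uncommitted fault-free nodes with the partition used in Theorem~\ref{thm:nec}.
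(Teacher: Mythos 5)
Your proof is correct, and it differs from the paper's in a way worth noting. The paper runs a single induction over rounds that establishes validity and progress simultaneously: with $L$ the set of fault-free nodes committed through round $r$ and $R=\sv-L-F'$, it shows (a) no node of $R$ can commit to a wrong value in round $r+1$, and (b) by the partition condition at least one node of $R$ does commit to $x_s$ in round $r+1$. This constructive progress argument yields termination within $n$ rounds as a by-product. You instead decouple the two properties: validity is proved by induction on the commitment event itself (pigeonhole on the $f+1$ reporting neighbors, at least one of which is fault-free and committed earlier), and termination is proved by contradiction applied to the \emph{limit} sets ($L$ = nodes that ever commit, $R$ = fault-free nodes that never do), invoking the partition condition once on that single partition. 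Both arguments hinge on exactly the same instantiation of Theorem \ref{thm:nec} with $F$ equal to the actual faulty set; your version is somewhat cleaner in separating concerns and your validity induction is more local than the paper's (which needs the global fact that all of $L$ has committed to $x_s$ before bounding the sources of bad messages), while the paper's version buys an explicit round bound that your contradiction argument does not directly exhibit. You also correctly flag and dispose of the only delicate point in the limit argument, namely that finitely many finite commitment times give a finite round by which $v$ has received $f+1$ copies of $x_s$. No gaps.
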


\begin{proof}

Suppose that $G(\sv, \se)$ satisfies the condition in Theorem \ref{thm:nec}.
Let $F'$ be the set of faulty nodes. By assumption, $F'$ is a feasible local fault set.
Let $x_s$ be the input at source node $s$.
We will show that, (i) fault-free nodes do not commit to any value other
than $x_s$ (Validity),
and, (ii) until all the fault-free nodes have committed, in each round of CPA,
at least one additional fault-free node commits to value $x_s$ (Termination).
The proof is by induction.

~

\noindent {\em Induction basis:} Source node $s$ commits in round 0
to output equal to its input $x_s$. No other fault-free nodes commit in
round 0.

\noindent{\em Induction:}
Suppose that $L$ is the set of fault-free nodes that have committed
to $x_s$ through round $r$, $r\geq 0$. Thus, $s\in L$.
Define $R=\sv-L-F'$.
If $R=\emptyset$, then the proof is complete. Let us now assume that
$R\neq\emptyset$.

Now consider round $r+1$.

\begin{itemize}
\item Validity:

Consider any fault-free node $u$ that has not committed prior to round $r+1$
(i.e., $u\in R$). All the nodes in $L$ have committed to $x_s$ by the
end of round $r$. Thus,
in round $r+1$ or earlier, node $u$ may receive messages containing values
different from $x_s$ only from nodes in $F'$. Since there are at most $f$
incoming neighbors of $u$ in $F'$, node $u$ cannot commit to any value different
from $x_s$ in round $r+1$. 

\item Termination:

By the condition in Theorem \ref{thm:nec}, there exists a node $w$
in $R$ such that (i) node $w$ has an incoming link from $s$, or (ii)
node $w$ has incoming links from $f+1$ nodes in $L$.
In case (i), node $w$ will commit to $x_s$ on receiving $x_s$ from node $s$
in round $r+1$ (in fact, $r+1$ in this case must be $1$).
In case (ii), since all the nodes in $L$ from whom node $w$ has
incoming links have committed to $x_s$ (by definition of $L$), node $w$
will be able to commit to $x_s$ after receiving messages from at least
$f+1$ incoming neighbors in $L$, since all nodes in $L$ have committed to $x_s$
by the end of round $r$ by the definition of $L$.\footnote{Since node $w$ did not commit prior to round $r+1$, it follows that at least one node in $L$ must have committed in round $r$.} Thus, node $w$ will commit to $x_s$ in round $r+1$.

\end{itemize}

This completes the proof.
\end{proof}

\section{CPA without prior knowledge of $f$}

In this section, we present a parameter-independent algorithm CPA-P that does not require prior knowledge of $f$, and each node only needs to know $n$, the number of nodes in the system. That is, given a graph $G$ that can tolerate $f$-local faults (where $f$ is unknown), the algorithm CPA-P presented below solves the broadcast problem in $G$ without usage of $f$. 


The core idea of CPA-P is for each node to exhaustively test all possible parameters by running $n+1$ instances of CPA algorithm in parallel. Each instance of CPA algorithm corresponds to a tested parameter ranging from $0$ to $n$. That is, each instance assumes that the tested parameter is the real bound ($f$) on the local faults at each node.\footnote{For simplicity of presentation, we assume that every node keeps track of $n+1$ instances (of the CPA algorithm) at the same time, even if the node already knows that some instances cannot terminate, since it may never receive enough identical messages if the tested parameter is too large. In a real implementation, each node $i$ only needs to keep track of $\lceil \frac{d_i}{2} \rceil - 1$ instances of CPA algorithm, where $d_i$ is the number of incoming neighbors at node $i$.} The correctness of CPA-P is based on the following observation: For each fault-free node, when the tested parameter is larger than or equal to the real parameter $f$, then there are only two outcomes: (i) it cannot commit, since it did not receive enough identical messages (violating Step 3 in CPA as specified in \ref{sec:cpa}), or (ii) it commits to a correct value, i.e., the input value of the source. Thus, in the end of the CPA-P,\footnote{Note that CPA is guaranteed to terminate in $n$ steps, and so is CPA-P.} each node can simply commit to the non-null value corresponding to the largest tested parameter. Now, we describe CPA-P formally.

Throughout the execution, each node $i$ (excluding $s$ and outgoing neighbors of $s$) maintains an $(n+1)$-entry vector $v_i$, where $v_i[t]~~(0 \leq t \leq n)$ is the estimate of output corresponding to the tested parameter $t$. In the beginning of the algorithm, every entry of vector $v_i$ is initialized to be a null value $\perp$, where $\perp$ is distinguished from all possible values of $x_s$.

Source node $s$ commits to its input $x_s$ at the start of the algorithm (round $0$), and transmits message $x_s$ to all its outgoing neighbors in round $1$. For the other nodes, the algorithm is as follows. 

\begin{itemize}
\item For outgoing neighbor of the source $s$:

\begin{enumerate}
\item In round $1$, it receives message $x$ directly from source $s$, and commits to output $x$.
\item In round $2$, it transmits messages $<x,0>, <x,1>,..., <x,n>$ to all its outgoing neighbors, and terminates.
\end{enumerate}

\item For node that is not an outgoing neighbor of $s$, in each round $r~(r>0)$:

\begin{enumerate}
\item For $0 \leq t \leq n$, each node $i$ that sets $v_i[t]$ in round $r-1$ to some value $x$, transmits message $<x,t>$ to all its outgoing neighbors.

\item For $0 \leq t \leq n$, through round $r$, if a node $i$ has received messages containing value $<x,t>$ from at least $t+1$ distinct incoming neighbors, then it sets $v_i[t] = x$.

\item In round $n$, each node $i$ commits to value $v_i[t']$, where $t'$ is the largest value in range $[0,n]$ such that $v_i[t'] \neq \perp$.
\end{enumerate}
Note that the algorithm performs $n$ rounds.
\end{itemize}

Now, we show that CPA-P is correct.

\begin{theorem}
Given a graph $G$ that can tolerate $f$-local faults, CPA-P achieves both validity and termination.
\end{theorem}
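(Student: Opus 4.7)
The plan is to decompose CPA-P into its $n+1$ parallel instances indexed by the tested parameter $t \in \{0, 1, \ldots, n\}$ and to prove two per-instance facts from which both validity and termination follow. First, for every $t \geq f$ and every fault-free node $i$, I claim that $v_i[t] \in \{x_s, \perp\}$ throughout the execution (a per-instance validity claim). Second, for the specific instance $t = f$, every fault-free node $i$ eventually has $v_i[f] = x_s$. Combined, these imply that the index $t'$ selected in round $n$ by any fault-free node satisfies $t' \geq f$ and $v_i[t'] = x_s$, giving validity; termination is immediate since CPA-P executes exactly $n$ rounds.

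The first claim I would prove by induction on the round $r$. The setting rule requires that node $i$ receive $\langle x, t \rangle$ from $t+1 \geq f+1$ distinct incoming neighbors; under the $f$-local model at most $f$ of these can be faulty, so at least one sender $j$ is fault-free. A fault-free $j$ transmits $\langle x, t \rangle$ in only one of two ways: either $j$ is an outgoing neighbor of $s$ tagging its round-2 broadcast with the value it committed to upon direct receipt from $s$ (necessarily $x_s$), or $j$ set its own $v_j[t] = x$ in some strictly earlier round. The induction hypothesis forces $x = x_s$ in the latter case, so in either case $x = x_s$, closing the induction.

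The second claim I would prove by identifying instance $t = f$ of CPA-P with the standard CPA algorithm of Section \ref{sec:cpa} run with parameter $f$: messages tagged with $f$ propagate independently of the other instances; the source commits in round 0 and broadcasts $x_s$ in round 1; each outgoing neighbor of $s$ commits in round 1 and broadcasts $\langle x_s, f \rangle$ in round 2; every other fault-free node sets $v_i[f] = x$ exactly when it has collected $f+1$ matching messages from distinct incoming neighbors. Because $G$ tolerates $f$-local faults, $G$ satisfies the condition of Theorem \ref{thm:nec} for parameter $f$, so Theorem \ref{thm:suf} applies to this instance and yields $v_i[f] = x_s$ at every fault-free $i$ within at most $n-1$ rounds, comfortably before the round-$n$ commit step.

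The main obstacle I anticipate is the bookkeeping in the first claim: I must justify that the cumulative counting in step 2 cannot be inflated by faulty nodes (since the count is by distinct incoming neighbors and the $f$-local bound caps the faulty contribution at $f$), and that fault-free senders of $\langle x, t \rangle$ really are constrained to have set $v_j[t] = x$ themselves in some earlier round, rather than arbitrarily forwarding values. Once this induction is in hand, the reduction in the second claim is essentially a direct invocation of the already-established Theorem \ref{thm:suf}.
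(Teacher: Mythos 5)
Your proposal is correct and follows essentially the same route as the paper: decompose CPA-P into its $n+1$ instances, establish a per-instance validity claim for tested parameters at least $f$ by induction on rounds (the paper's Claim~\ref{claim:valid-t}, phrased there for $t>f$ via the bound $|N_u^-\cap F'|\leq f<t$, which is the same pigeonhole argument as your ``at least one of the $t+1$ senders is fault-free''), and invoke the correctness of the instance $t=f$ (via Theorem~\ref{thm:suf}) to guarantee that the largest non-null index $t'$ satisfies $t'\geq f$ and carries $x_s$. No gaps; the bookkeeping concerns you flag are handled exactly as you anticipate.
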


\begin{proof}
Denote by CPA-P-$t$ ($0\leq t \leq n$) the instance of CPA-P corresponding to the tested parameter $t$. Then by assumption of $G$, CPA-P-$f$ is correct. Thus, for each fault-free node $i$, $v_i[f]=x_s$, the input value at source $s$. Now, we prove the following claim:

\begin{claim}
\label{claim:valid-t}
For $t > f$, in CPA-P-$t$, fault-free nodes never decide on an invalid value, i.e., for each fault-free node $i$, either $v_i[t] = x_s$ or $v_i[t] = \perp$. 
\end{claim}

\begin{proof}
The proof is by induction.

\noindent {\em Induction basis:} Source node $s$ and its outgoing neighbors commit to output equal to the source's input $x_s$ in round 0 and 1, respectively. No other fault-free nodes commit in round 0 and 1.

\noindent{\em Induction:}
Suppose that $L$ is the set of fault-free nodes that have committed
to $x_s$ through round $r~(r>0)$. Thus, $s\in L$. Let $F'$ be the set of faulty nodes, and $|F'| = f$. Define $R=\sv-L-F'$. If $R=\emptyset$, then the proof is complete. Let us now assume that $R\neq\emptyset$.

Now consider round $r+1$.

Consider any fault-free node $u$ that has not committed prior to round $r+1$
(i.e., $u\in R$). All the nodes in $L$ have committed to $x_s$ by the
end of round $r$. Thus, in round $r+1$ or earlier, node $u$ may receive messages containing values different from $x_s$ only from nodes in $F'$. Therefore, node $u$ cannot commit to any value different from $x_s$ in round $r+1$, since by assumption $|N_u^- \cap F'| \leq f < t$.

Unlike the proof in Theorem \ref{thm:suf}, node $u$ may never gather enough (i.e., at least $t+1$) identical messages from its incoming neighbors, since $t > f$. Thus, for CPA-P-$t$, node $u$ may never terminate. In this case, $v_u[t] = \perp$.
\end{proof}

The source node $s$ and fault-free outgoing neighbors of $s$ commit to $x_s$ in round 0 and 1, respectively. By Claim \ref{claim:valid-t} and the fact that CPA-P-$f$ satisfies both validity and termination, each fault-free node $i$ (excluding $s$ and outgoing neighbors of $s$) commits to $x_s$. Thus, CPA-P is correct.

\end{proof}


\section{Discussion}

This section discusses some extensions on the result presented above.

\subsection{Generalized Fault Model}
\label{sec:general}

\newcommand{\Garrow}{\stackrel{g}{\Rightarrow}}

In this subsection, we briefly discuss how to extend the above results under
a generalized fault model. The generalized fault model \cite{tseng_iabc_generalized_fault_model} is characterized using {\em fault domain} $\scriptf \subseteq 2^{\scriptv}$ as follows:
Nodes in set $F$ may fail during an execution of the algorithm only if there exists set $F^*\in \scriptf$ such that $F\subseteq F^*$. Set $F$ is then said to be a {\em feasible} fault set.
\begin{definition}
Set $F\subseteq \scriptv$ is said to be a \,\underline{feasible}\, fault set, if there exists $F^*\in\scriptf$ such that $F\subseteq F^*$.
\end{definition}


For a set of nodes $B$, define $N^-(B) = \{ i~|~(i,j)\in \se,~i\not\in B,~j\in B\}$, the set of incoming neighbors of $B$.

\begin{definition}

Given $\scriptf$, for disjoint sets of nodes $A$ and $B$, where $B$ is non-empty.
\begin{itemize}
\item $A \Garrow B$ iff for every $F^* \in \scriptf$, $N^-(B) \cap A \not\subseteq F^*$.
\item $A\not\Garrow B$ iff $A\Garrow B$ is {\em not} true.
\end{itemize}
\end{definition}

~

Under the generalized fault model, step 3 of CPA needs to be modified as follows. Let us call the modified algorithm CPA-G.
\begin{list}{}{}{}
\item[3.] Through round $r$, if a node has received messages containing value $x$ from a set $M$, where $M$ is not a feasible fault set, then the node commits to value $x$. 
\end{list}

It is easy to show that a modified version of
Theorem \ref{thm:nec} stated below
holds for the generalized fault model. 

\begin{theorem}
\label{thm:nec:gen}
Suppose that CPA-G is correct in graph $G(\sv, \se)$ under the generalized
fault model. Let sets $F, L, R$ form a partition
of $\sv$, such that source (i) $s\in L$, (ii) $R$ is non-empty, and (iii) $F$ is a feasible fault set, then

\begin{itemize}
\item $L \Rightarrow R$, or

\item $R$ contains an outgoing neighbor of $s$, i.e., $N_s^+ \cap R \neq \emptyset$.

\end{itemize}

\end{theorem}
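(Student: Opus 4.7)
The plan is to mirror the proof of Theorem~\ref{thm:nec} line by line, replacing the $(f+1)$-neighbor counting commit condition by its generalized counterpart from CPA-G (and, accordingly, reading the first conclusion as $L \Garrow R$ rather than $L \Rightarrow R$, which is the natural lift under $\scriptf$). I would fix a partition $F,L,R$ of $\sv$ satisfying the hypotheses and consider a specific execution of CPA-G in which every node of $F$ behaves as if it had crashed, sending no messages. Since $F$ is a feasible fault set, this is an admissible adversarial behavior, so by the assumed correctness of CPA-G every fault-free node, and in particular every node of $R$, eventually commits to $x_s$.

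Next I would let $r>0$ be the earliest round in which some node of $R$ commits, and pick a committing node $v \in R$. According to the CPA-G specification, $v$'s commit in round $r$ is justified by one of two things: either (i) $v$ received $x_s$ directly from $s$, which immediately gives $(s,v)\in\se$ and hence the second alternative $N_s^+\cap R\neq\emptyset$; or (ii) there is a set $M$ of incoming neighbors of $v$, with $M$ \emph{not} a feasible fault set, all of which sent $x_s$ to $v$ by round $r$. The crucial structural claim in case (ii) is that $M\subseteq L$: nodes of $F$ are silent by construction, and by the minimality of $r$ no node of $R$ has committed before round $r$, so any neighbor of $v$ that transmits $x_s$ on time must have already committed and therefore lies in $L$.

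From $M \subseteq N_v^- \cap L$ and $v \in R$ I obtain $M \subseteq N^-(R) \cap L$. To conclude $L \Garrow R$, suppose for contradiction that some $F^* \in \scriptf$ satisfies $N^-(R) \cap L \subseteq F^*$. Then $M \subseteq F^*$, which contradicts the non-feasibility of $M$. Therefore $L \Garrow R$, completing case (ii).

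The only non-routine step is the justification that the witnessing set $M$ really lives inside $L$; this is where both ingredients of the construction are used together, namely the crash behavior of $F$ (ruling out $M \cap F$) and the first-commit choice of $v$ (ruling out $M \cap R$). Everything afterwards, including the upgrade from non-feasibility of $M$ to non-feasibility of the larger set $N^-(R) \cap L$, is a one-line monotonicity observation in the fault domain $\scriptf$.
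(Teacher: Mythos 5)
Your proof is correct and is exactly the adaptation of the proof of Theorem~\ref{thm:nec} that the paper has in mind (the paper itself omits the argument, remarking only that it is ``easy to show''): crash the nodes of $F$, take the first node $v\in R$ to commit, and observe that its certifying non-feasible set $M$ must lie in $N_v^-\cap L\subseteq N^-(R)\cap L$, so no $F^*\in\scriptf$ can contain $N^-(R)\cap L$. Your reading of the first alternative as $L\Garrow R$ rather than the literal $L\Rightarrow R$ is the right one, since the $(f+1)$-based relation $\Rightarrow$ is not meaningful under the generalized fault model and the conclusion your argument actually delivers is precisely the stated definition of $\Garrow$.
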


\subsection{Broadcast Channel}

We have so far assumed that the underlying network is a point-to-point network.
The results, however, can be easily extended to the {\em broadcast} or {\em radio model} \cite{Koo_broadcast, Vartika_broadcast} as well. In the {\em broadcast model}, when a node transmits a value, all of its outgoing neighbors receive this value identically. Thus, no node can transmit mismatching values to different outgoing neighbors. Then, it is easy to see that the same condition as the point-to-point network can be shown to be necessary
and sufficient for of CPA under the broadcast model as well.

Now consider the following variation of the CPA algorithm:
if the outgoing neighbors of source $s$ do not receive a message from $s$
in round 1, the message value is assumed to be some default value.
With this modification, the 
condition in Theorem \ref{thm:nec} can also be shown to be necessary and sufficient to perform Byzantine Broadcast \cite{PSL_BG_1980} under the broadcast model,
while satisfying the following three conditions (allowing $s$ to be faulty):
\begin{itemize}
\item {\bf Termination:} every fault-free node $i$ eventually decides on an output value $y_i$.

\item {\bf Agreement:} the output values of all the fault-free nodes are equal, i.e., there exists $y$ such that, for every fault-free node $i, y_i = y$.

\item {\bf Validity:} if the source node is fault-free, then for every fault-free node $i$, the output value equals the source's input, i.e., $y = x_s$.

\end{itemize}
The proof follows from the proof of Theorem \ref{thm:nec} and the observation
that if $s$ transmits a value, then all the outgoing neighbors of $s$ receive identical value from $s$, which equals its input $x_s$ when $s$ is fault-free.

\subsection{Asynchronous Network}

In our analysis so far, we have assumed that the system is synchronous.
For a point-to-point network with fault-free source $s$, 
it should be easy to see that the condition in Theorem \ref{thm:nec} is
also necessary and sufficient to achieve agreement using
a CPA-like under the asynchronous model \cite{AA_Dolev_1986} as well.
In this case, the algorithm may not proceed in rounds, but a node still
commits to value $x$ either on receiving the value directly from $s$,
or from $f+1$ nodes.

This claim may seem to contradict the FLP result \cite{FLP_one_crash}.
However, our claim assumes that the source node
is fault-free, unlike \cite{FLP_one_crash}.



\section{Conclusion}

In this paper, we explore broadcast in arbitrary network using the CPA algorithm in $f$-local fault model. In particular, we provide a {\em tight} necessary and sufficient condition on the underlying network for the correctness of CPA.

\section*{References}






\bibliographystyle{model1a-num-names}



\end{document}